\documentclass[11pt]{article}
\usepackage{amsmath,amssymb,amsthm,mathtools,mathrsfs,dsfont}
\usepackage[a4paper,margin=1in]{geometry}
\usepackage{enumitem}
\usepackage{hyperref}
\usepackage{amsrefs,titlesec,fmtcount}
\newtheorem{remark}{Remark}
\newtheorem{lemma}{Lemma}[section]
\newtheorem{theorem}{Theorem}

\newtheorem{corollary}[lemma]{Corollary}
\newtheorem{definition}{Definition}[section]
\newcommand{\define}{\stackrel{\text{\tiny def}}{=}}

\newcommand{\R}{\mathbb{R}}

\newcommand{\id}{\mathds{1}}
\title{Electrostatic inequalities and stability of matter on asymptotically Euclidean manifolds}
\author{David Stuart
 \\{\it{\small Centre for Mathematical Sciences, Wilberforce Road, 
 Cambridge, CB3 OWA,
 England}}\\{\it\small{ email:dmas2@cam.ac.uk}}
}
\date{}

\begin{document}
\maketitle

\thispagestyle{empty}
\vspace{-0.3in}
\begin{abstract}
  We prove stability of nonrelativistic fermionic matter in the presence of fixed nuclei \(\{R_j\}\) on an
  asymptotically Euclidean Riemannian manifold \((M,g)\).
  Upper and lower bounds on sectional curvatures are assumed, but 
  no specific sign conditions on the curvature are imposed.
  The proof follows the route of Lieb-Thirring and electrostatic inequalities, the latter being
  proved by means of a Voronoi decomposition type argument which is however based on the Green functions \(\{G(\,\cdot\,,R_j)\}\) rather than the distance function.
 \end{abstract}
\tableofcontents
  \titleformat{\section}
  {\large\bfseries}
  {\thesection}
  {0.75em}
  {}
\titleformat{\subsection}
  {\normalsize\bfseries}
  {\thesubsection}
  {0.75em}
  {}
\titleformat{\subsubsection}
  {\normalsize\bfseries}
  {\thesubsubsection}
  {0.75em}
  {}
  
\protect\renewcommand{\thesection}{\arabic{section}}
\protect\renewcommand{\thesubsection}{\arabic{section}.\arabic{subsection}}
\protect\renewcommand{\theequation}{\arabic{section}.\arabic{equation}}
\renewcommand*{\theHsection}{\arabic{section}}
\renewcommand*{\theHsubsection}{\arabic{section}.\arabic{subsection}}
\renewcommand*{\theHequation}{\arabic{section}.\arabic{equation}}

\section{Introduction and statement of main theorem}
\setcounter{equation}{0}
The quantum mechanical description of \(N\) identical spinless unit mass fermions on a smooth Riemannian manifold \((M,g)\), in the
presence of \(N\) fixed nuclei located at points \(R_1,\dots R_N\), involves a wave function
\[
\Psi=\Psi(x_1,\dots x_N)\qquad\Psi\in \Lambda^N L^2(d\mu_g)\,,
  \]
taking values in the completed \(N\)-fold antisymmetric product of \(L^2(d\mu_g)\).
Thus \(\Psi\) is a square-integrable completely antisymmetric function of the fermionic coordinates
\(x_1,\dots x_N\,,\) and it evolves according to
  \(
i\hbar\dot\Psi=H\psi
\) where the Hamiltonian is \(H=T+V_{Coul}\) with
\begin{equation}\label{ham}
  T=-\frac{1}{2}\sum_{i=1}^N\,\Delta^i_g\,,\qquad
  V_{Coul}=\sum_{i<j} G(x_i,x_j)-\sum_{i,j}G(x_i,R_j)+\sum_{j<k}G(R_j,R_k)\,.
  \end{equation}
  Here \(\mu_g\), \(\Delta_g\) and \(G\) are, respectively, the volume form, the
  Laplace-Beltrami operator and the Green function determined by the metric \(g\); the index \(i\) indicates that the
  Laplacian acts on the \(i^{th}\) particle. Explicit expressions in local coordinates appear below and in the appendices, together
  with definitions of some notions which are used below. For simplicity
  we have assumed all  particles have identical charge which is absorbed in the Green function. 
We will use as input the following ingredients:

\begin{enumerate}[label=(I\arabic*)]
\item \begin{enumerate}[label=(\alph*)]
\item  {[{\em AE manifold}]}  Numbers \(q>3,\tau>0\) exist such that $(M,g)$ is a smooth three dimensional
  \(W^{1,q}_{-\tau}\)-asymptotically Euclidean manifold in the sense of Definition \ref{bartnikdef}, which has
  positive injectivity radius \(r_{inj}>0\)
  and convexity radius \(r_{conv}\in(0,r_{inj})\).
\item {[{\em Sectional curvature bounds}]} There exist real numbers \(k_1,k_2\) such that for any plane \(\pi\) in any tangent space \(T_xM\) the
  sectional curvature \(k(\pi)\) is bounded above and below by
  \begin{equation}\label{IB1}
-\infty<k_1\leq k(\pi)\leq k_2<+\infty\,,
  \end{equation}
  and there exists
  \(r_{*{\scriptscriptstyle{1}}}\in(0,r_{inj})\) and \(L\geq 1\) depending upon \(k_1,k_2\) such that, using normal geodesic coordinates
  centered at \(y\in M\), 
  \begin{equation}\label{IB2}
L^{-1}\sum_{a=1}^3(\xi^a)^2\leq\sum _{a,b=1}^3 g_{ab}(x)\xi^a\xi^b\leq L\sum_{a=1}^3(\xi^a)^2
\end{equation}
holds for \(d_g(x,y)<r_{*{\scriptscriptstyle{1}}}\), where \(d_g\) is the Riemannian distance function.
Define \(r_*=\min\{r_{conv},
r_{*{\scriptscriptstyle{1}}}\}\).
\item For \(q>3\) as above there exists \(Q<\infty\) such that, with \(x\mapsto r(x)=d_g(x,y)\),
 \begin{equation}\label{IB3}
  \sup_{y\in M}\sum_{|\alpha|\leq 4}\;\int\limits_{r(x)<r_{*}}
  |\partial^{\alpha} g(x)|^q\,\mu_g(dx)\,\leq Q\,,
\end{equation}
in geodesic normal coordinates, where the sum is over all multi-indices of order \(\leq 4\). 
\end{enumerate}
\item {[{\em Non-parabolic}]}  \((M,g)\)
  admits a unique minimal positive Green function $G_y(x)=G(x,y)$ satisfying
  \(\lim_{x\to+\infty}G_y(x)=0\); to be precise \(G\) is a distribution on \(M\times M\)
  which satisfies
  \begin{equation}\label{115}
-\Delta_g G_y=\frac{\delta_y}{\sqrt{\det g}}\,,
    \end{equation}
    and is \(C^\infty\) off the diagonal. We will use the following Hadamard asymptotics
    as \(x\to y\), which is derived in the appendix.
Let \(\beta\in C^\infty(\R)\) be any cut-off function with \(\beta(t)=0\) for
\(t\geq 1\) and \(\beta(t)=1\) for \(t\leq \frac14\), then
  \begin{equation}\label{117}
    G_y(x)=\,\left(\frac{U_0(x,y)}{4\pi r(x)}-\frac{r(x)}{8\pi}U_1(x,y)\right)\beta\left(\frac{r(x)^2}{r_*^2}\right)\,
    +\,{\cal E}
  \end{equation}
 again with  \(r(x)=d_g(x,y)\) and \(U_0,U_1\) smooth functions given in geodesic normal coordinates by \eqref{ste}
and \(\|{\cal E}\|_{C^1(M)}\leq C_H(L,r_*,Q)\) uniformly in \(y\), see the appendix for a derivation.

\item {[{\em Nonnegative Coulomb energy}]} If $\mu=\mu_+-\mu_-$ is a signed Borel measure, its Green potential is defined for \(x\in M\) by
\[
\mathbb{G}\mu(x)=\int_{M} G(x,y)\,\mu(dy)=\int_{M} G(x,y)\,\mu_+(dy)-\int_{M} G(x,y)\,\mu_-(dy),
\]
(under the assumption that one of the terms on the right is finite).
The Coulomb (or Dirichlet) energy
\[
D(\mu,\mu)\define\frac12\int_{M} \mathbb{G}\mu(x)\,\mu(dx)=\frac12\iint_{M\times M}G(x,y)\mu(dx)\mu(dy)
\]
is assumed to be non--negative. (It will always be the case for signed measures \(\mu\) under consideration that if
\(\mu=\mu_+-\mu_-\) is the Hahn decomposition, then
\(D(\mu_+,\mu_+)<\infty\) and \(D(\mu_-,\mu_-)<\infty\,.\))

\item {[{\em Lieb-Thirring}]}  There exists \(C_{LT}>0\) such that
  \begin{equation}\label{LT1}\sum_{j} |E_j|\leq C_{LT}\,\int_{M}|V_{-}|^{\frac{5}{2}}d\mu_g\,,\end{equation}
  where the sum is over the negative eigenvalues of the operator
  \(-\frac12\Delta_g+V,\) and \(V_{-}\) is the
  negative part of \(V\); and for \(N\) fermions with effective one-particle potential
  energy \(V^N_{\mbox{\tiny 1P}}(x_1,\dots x_N)=\sum V(x_i)\) the inequality
  \begin{align}\label{LT2}\begin{split}
    \left(\Psi,\left(T+V^N_{\mbox{\tiny 1P}}\right)\Psi\right)\geq -C_{LT}\,\int_{M}|V_{-}|^{\frac{5}{2}}d\mu_g\,
\end{split}\end{align}
will be assumed to hold for all smooth and normalized \(\Psi\in \Lambda^N(L^2(d\mu_g))\).
\end{enumerate}

\begin{theorem}\label{main}
  Under the hypothese above, there exists a positive number \(C\), depending on
  \(C_{LT},L,r_*,Q\) but independent of the locations \(\{R_j\}\) of the nuclei, such that for
  all smooth normalized \(\Psi\in\Lambda^N L^2(d\mu_g)\)
  \[
\left(\Psi,\,\left(T+V_{Coul}\right)\Psi\right)\geq -CN\,.
    \]
  \end{theorem}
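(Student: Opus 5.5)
We follow the Lieb--Thirring route, replacing the Euclidean distance by the Green function. The argument has three steps: an electrostatic inequality, the Lieb--Thirring reduction to a one-body problem, and a bound on the resulting semiclassical integral.

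\emph{Step 1: electrostatic inequality.} The target is the pointwise bound
\[
V_{Coul}(x_1,\dots,x_N)\;\geq\;-\sum_{i=1}^N \Big(\max_j G(x_i,R_j)\Big)\;-\;C_1 N
\]
for all configurations, with \(C_1=C_1(L,r_*,Q)\). To obtain it, define Green--Voronoi cells \(\Gamma_j=\{x: G(x,R_j)\geq G(x,R_k)\ \forall k\}\). In each cell, let \(\nu_j\) be the harmonic measure (balayage) of \(\delta_{R_j}\) onto a small Green sphere \(\{G(\cdot,R_j)=1/\epsilon\}\). Its potential equals \(\min\{G(\cdot,R_j),1/\epsilon\}\), and the self-energy \(D(\nu_j,\nu_j)=1/(2\epsilon)\) is finite.

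Next, smear each electron \(x_i\) into a measure \(\mu_i\) by the same balayage at a radius \(\epsilon_i\) adapted to its distance from the nearest nucleus. Then apply (I3), nonnegativity of \(D\), to
\[
\mu=\sum_i\mu_i-\sum_j\delta_{R_j}\quad\text{(nuclei unsmeared, or smeared and corrected)}.
\]
Expanding \(0\le D(\mu,\mu)\) bounds the true Coulomb energy from below by \(-\sum_i(\text{self-energy of }\mu_i)\) plus error terms. The error terms come from replacing point charges by smeared ones. The nuclear--nuclear and electron--nuclear replacements are controlled because \(G(\cdot,R_j)\) is superharmonic, so averages over Green spheres do not exceed point values. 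The electron--electron replacement costs nothing for the same reason, up to sign.

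The self-energy of \(\mu_i\) is comparable to \(\max_jG(x_i,R_j)\) once \(\epsilon_i\) is chosen as half the Green "distance" to the nearest nucleus. This is the Lieb--Yau/Baxter-type argument transplanted from distances to Green functions.

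\emph{Step 2: local Euclidean comparability.} To compare Green spheres with geodesic balls, use the Hadamard asymptotics \eqref{117} with uniform \(C^1\) control of \(\mathcal E\), together with \eqref{IB2}. These give
\[
\Big|G(x,y)-\frac{1}{4\pi d_g(x,y)}\Big|\le C\quad\text{for } d_g<r_*,\qquad G\le C \ \text{ for } d_g\geq r_*/2 .
\]
Consequently Green spheres of small radius are nearly geodesic spheres, their harmonic measures have controlled Lipschitz densities, and cell boundaries behave well near nuclei.

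\emph{Step 3: Lieb--Thirring and the final integral.} Keep half of the kinetic energy. Write
\[
H\geq \tfrac12 T+\Big(\tfrac12T-\sum_i W(x_i)\Big)-C_1N,\qquad W(x)=\max_jG(x,R_j)+C_2\,\id_{\{\cdots\}},
\]
and apply (I4) to \(\tfrac12T+\sum_iV(x_i)\) with \(V=-W\). This reduces the problem to showing \(\int_M W^{5/2}\,d\mu_g\le CN\).

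The exponent \(5/2\) is not integrable against \(d^{-5/2}\) near a nucleus in 3D, so \(W\) must be truncated. Following the standard argument, replace \(W\) by \(\min\{W,\cdot\}\) and use the kinetic term remaining from the first \(\tfrac12T\). Equivalently, use the hydrogen-type bound \(-\tfrac14\Delta_g-\frac{1}{4\pi d_g}\geq -C\) on each geodesic ball, proved with \eqref{IB2} and a partition of unity. This handles the region \(d_g(x,R_j)<\delta\) for fixed \(\delta\), and only the electrons in that region pay.

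Outside these balls, \(W\) is bounded and supported where \(G(\cdot,R_j)\) exceeds a fixed constant. That set lies in \(\cup_jB(R_j,\delta')\) up to the uniform error \(C_H\), so its volume is \(\le CN\) by \eqref{IB2}. Decay of \(G\) at infinity (I2) and AE asymptotics keep all constants uniform in \(\{R_j\}\).

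\emph{Main obstacle.} The hardest step is Step 1: proving that the Green--Voronoi smearing gives a self-energy bound comparable to \(\max_jG(x_i,R_j)\) with an additive error only \(O(N)\), uniformly in configuration. I would first try to use superharmonicity of \(G(\cdot,R_j)\) and the explicit potential \(\min\{G,1/\epsilon\}\) of harmonic measures to make all cross terms have a favourable sign. I would use the Hadamard expansion only to bound the self-energies and the \(O(1)\) discrepancies near nuclei.
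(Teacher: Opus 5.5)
\textbf{Gap in Step~1: how the nuclei enter the positivity argument.} This is exactly the point you flag as the main obstacle, and the mechanism you propose does not work.

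With point nuclei, $\mu=\sum_i\mu_i-\sum_j\delta_{R_j}$ has $D(\delta_{R_j},\delta_{R_j})=+\infty$, so (I3) gives nothing.

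With smeared nuclei $\nu_j$, your sign claim is backwards for the electron--nucleus terms. Superharmonicity gives $2D(\mu_i,\nu_j)=\int\min\{G(\cdot,R_j),1/\epsilon_j\}\,d\mu_i\le G(x_i,R_j)$, so smearing \emph{weakens} the attraction. Since the attraction enters $V_{Coul}$ with a minus sign, the error $\sum_{i,j}\bigl[G(x_i,R_j)-2D(\mu_i,\nu_j)\bigr]\ge0$ must be bounded from above. It vanishes only when the exact identity \eqref{ext} applies, i.e.\ when $\operatorname{supp}\mu_i$ avoids the Green ball of $R_j$.

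Neither choice of nuclear scale fixes this.
\begin{itemize}
\item With a fixed scale $\epsilon$, electron $i$ pays roughly $\sum_{j:\,d_g(x_i,R_j)\lesssim r_*}G(x_i,R_j)$ rather than $\max_jG(x_i,R_j)$. For $N$ nuclei clustered in a tiny ball this is about $N/(4\pi d)$, and the Lieb--Thirring integral becomes of order $N^{5/2}$.
\item With configuration-dependent scales (say half the nearest-neighbour distance $D_j$, so the Green balls are disjoint), the self-energies $\sum_jD(\nu_j,\nu_j)\sim\sum_jD_j^{-1}$ are not $O(N)$. The nuclear repulsion cannot absorb them, because it is already used up in $2D(\nu_j,\nu_k)=G(R_j,R_k)$. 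Recovering it is the Lieb--Yau analysis of interfaces between cells, which the paper deliberately avoids.
\end{itemize}
Your cells $\Gamma_j$ carry the right intuition ($\max_kG_k=G_j$ on $\Gamma_j$), but they never enter an estimate.

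\textbf{The missing idea: screen the nuclei instead of smearing them.} Put $G_k=G(\cdot,R_k)$, $W=\sum_kG_k$, and $\Phi=W-\max_kG_k=\min_l\sum_{k\neq l}G_k$. This $\Phi$ is superharmonic with $0\le\Phi\le W\to0$, so $\Phi=\mathbb{G}\nu$ with $\nu=-\Delta_g\Phi\ge0$. Since $\Phi=\sum_{k\neq l}G_k$ near $R_l$,
\[
D(\nu,\nu)=\tfrac12\int_M\Phi\,d\nu\le\tfrac12\int_M W\,d\nu=\tfrac12\sum_l\Phi(R_l)=\sum_{j<k}G(R_j,R_k).
\]
So the nuclear repulsion pays exactly for $\nu$, whatever the configuration. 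Now apply $0\le D(\mu-\nu,\mu-\nu)$ with $\mu=\sum_i\gamma_{\alpha_i,x_i}$. Use \eqref{ntii} for the electron--electron terms and \eqref{ext} for $\int W\,d\gamma_i=\sum_jG(x_i,R_j)$, which holds exactly. This gives
\[
V_{Coul}\ge-\sum_i\Bigl[D(\gamma_i,\gamma_i)+\int\max_kG_k\,d\gamma_i\Bigr].
\]
Each electron now sees only its nearest nucleus, so clustering is harmless. Your choice of $\alpha_i$ and the Hadamard bounds then finish Step~1, with a constant of about $3$ rather than $1$ in front of $\max_jG$; that change is harmless.

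\textbf{Two smaller points in Step~3.}
\begin{itemize}
\item $d_g^{-5/2}$ \emph{is} integrable near a nucleus in three dimensions, since $\int_0^{r_*}s^{-5/2}s^2\,ds=2r_*^{1/2}$. The only integrability problem is at infinity. It is handled by charging a constant per electron lying at distance $\ge r_*$ from all nuclei; no truncation near the nuclei is needed.
\item The per-ball hydrogen bound is not equivalent to the Lieb--Thirring route. When nuclei cluster, $-\tfrac14\Delta_g-\max_jG(\cdot,R_j)$ has no lower bound independent of $N$. So the fermionic inequality \eqref{LT2} is genuinely needed there.
\end{itemize}
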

  \noindent
  The proof is presented at the end of \S\ref{electro}. We now make some remarks on the proof and the conditions
(I1)-(I4) imposed above.
  
  \noindent
  {\bf Remarks}
    \begin{enumerate}
    \item[(i)] The only result on stability of matter on a manifold that I am aware of is proved in \cite{MR2787767}
      by a technique which
      generalizes Baxter's potential theoretic approach (from \cite{MR0586803}) to establish electrostatic inequalities on Riemannian manifolds; this uses a nonnegative Ricci curvature condition at various points in the proof.
To avoid curvature restrictions we adopt a different approach here, modeled on that in \cite{MR2583992}*{\S 5 and \S 7.2} and
      \cite{solorev}, with some changes, most significantly:
    \begin{enumerate}\item[(a)] Harmonic measure on level sets of the Green function is used in place of the standard Eucldean \(\R^3\) Newton shell theorem to smear out the charge with control of the interaction energy; this was already exploited in \cite{MR2787767}.
    \item[(b)] Hadamard expansions for the Green function are used to control the level sets of the Green function and the regularized self-interaction \eqref{selfi}.
    \item[(c)] The procedure based on the Voronoi decomposition used in \cite{MR2583992}*{Chapter 5} to bound the Coulomb interaction from below by a sum
      of one body potentials is modified to avoid the need to consider interfaces between the Voronoi cells, which appears challenging in the manifold case.
      \end{enumerate}
\item[(ii)] One could also imagine applying the approach here to the case of a nonnegative Ricci curvature complete Riemannian manifold studied in \cite{MR2787767} - it seems quite possible it would work it, but would likely give worse constants than achieved in \cite{MR2787767} due to the way the regularization here is constrained by the use of local Hadamard expansions for the Green function.
 \item[(iii)]
    The condition \eqref{IB2} is actally a consequence of \eqref{IB1} and the
    Rauch comparison principle \cite{MR1393941}*{\S VIII.4}.
    \item[(iv)] The number \(Q\) appearing in \eqref{IB3} could
    potentially be controlled quantitatively in terms of norms for the Ricci curvature, using harmonic coordinates
    as in \cite{MR0644518}. Harmonic coordinates will always exist locally, but to work with them quantitatively
    a lower bound on the harmonic radius is needed. ( The {\em harmonic radius} \(r_{harm}\) of \(M\) is the
    largest positive number such that any ball \(B_y(r)\) of radius \(r<r_{harm}\) is a subset of a local
    harmonic coordinate chart.) In the present asymptotically Euclidean case it seems that a combination of
    \cite{bartnik}*{Proposition 3.3} and \cite{MR1074481}*{Lemma 2.2} would provide an atlas of harmonic charts
    which would yield quantitative control in terms of \(k_1,k_2\), the injectivity radius and the diameter of
    the compact core \(K\subset M\) as well as appropriate \(W^{k,q}\) norms of the Ricci tensor.
  \item[(v)]
    There is a discussion of the derivation of Lieb-Thirring inequalities on Riemannian manifolds at the beginning of Section 5 in \cite{MR2787767} which
    essentially carries over to the class of asymptotically Euclidean manifolds considered here. The argument given there reduces the issue to showing that
    the volume of the sets \eqref{defom} is \(\lessapprox \alpha^{-3}\), which follows from the Hadamard asymptotics \eqref{117}.
  \end{enumerate}
  \section{Harmonic Measure and Newton interaction inequality}
  \setcounter{equation}{0}
\subsection{Green Functions and harmonic measure}

Let $(M,g)$ be a smooth oriented Riemannian manifold.
In local coordinates the  metric is given by
\(
g = g_{ij}\,dx^i dx^j,
\)
and the corresponding Riemannian volume form is
\(
d\mu_g
=
\sqrt{\det g}\,dx^1dx^2dx^3.
\) The induced metric on the cotangent space is represented by the inverse matrix \(g^{-1}\), whose entries
will be written \(g^{ij}\) as usual.
With these definitions
the Laplace-Beltrami operator is
\[
\Delta_g u
=\frac{1}{
\sqrt{\det g}}\,
\partial_i
\left(
\sqrt{\det g}\,g^{ij}\partial_j u
\right).
\]

Let $G(x,y)$ denote a Green function for $-\Delta_g$, namely a distributional solution of \eqref{115},
\(G\), which is smooth on the complement of the diagonal in \(M\times M\), and
\(x\mapsto G(x,y)\) is locally integrable; an
equivalent version of the above is to require that, for every smooth compactly supported test function $f$,
\[
\int_M -\Delta_g  f(x)G(x,y)\,d\mu_g(x)
=
f(y).
\]

Let $\Omega\subset M$ be an open bounded domain having \(C^1\)  boundary \(\partial\Omega\)
with outward-pointing unit co-normal vector field $\nu$.
Then the Green identity for a pair of \(C^2\) functions \(U,V\) reads
\[
\int_\Omega
(U\Delta_g V - V\Delta_g U)\,d\mu_g
=
\int_{\partial\Omega}
\left(
U\langle \nu,\nabla V\rangle
-
V\langle \nu,\nabla U\rangle
\right)d\sigma_g.
\]
were
\[
\langle \nu,\nabla V\rangle
=
\nu_i g^{ij}\partial_j V
\]
is the normal derivative on the boundary, and \(d\sigma_g\) is the area form defined
by the induced metric, see appendix for formulae in local coordinates.

Fix $y\in\Omega$.
The {\em harmonic measure} associated to $y\in\Omega$ is a measure supported on
the boundary, given by the formula
\[
d\gamma_{y}(x)
=
-\langle \nu,\nabla_x G(x,y)\rangle d\sigma_g(x)
\qquad x\in \partial\Omega.
\]
Applying the Green identity with $U=1$ and $V=G(\cdot,y)$ gives the normalization
\[
\int_{\partial\Omega} d\gamma_{y} =1.
\]
Under suitable assumptions the measure $\gamma_{\alpha,y}$ is non-negative. In particular, inside a
sufficiently small geodesic ball
the Hadamard parametrix expansion \eqref{117} is valid in normal geodesic coordinates \eqref{gnc}, allowing more detailed investigation.
 We will actually use 
regions of the form
\begin{equation}\label{defom}
\Omega_{\alpha,y}
=
\{z\in M:G(z,y)>\alpha\}\cup\{y\},
\end{equation}
which for large \(\alpha\) is an open neighbourhood of \(y\) with smooth boundary \(\partial\Omega_{\alpha,y}\), which can
be written as a graph over the unit sphere, as we now explain (with more details in the appendix).
In a neighbourhood \(B_0(r_{*})=\{x:d_g(x,y)<r_{*}\}\) of \(y\), which will be taken as the origin,
there exists a system of geodesic spherical coordinates
\((r,\{\theta^\alpha\}_{\alpha\in \{1,2\}})\)
in which the metric takes the form
\(
g=dr^2+g_{\alpha\beta}(r,\theta)d\theta^\alpha d\theta^\beta\,,
\)
with \(g_{\alpha\beta}\) bounded above and below in terms of the sectional curvature bounds in \eqref{IB1}.
Further choosing \(\rho_1=r_*/2\) in the Hadamard asymptotics derived in the appendix, we
have \(\beta\equiv 1\) for \(r<r_*/2\) and writing
\(\hat G(r,\theta^1,\theta^2)=G_y(x(r,\theta^1,\theta^2))\) for the Green function written in the
spherical geodesic coordinates, and similarly for other functions, the Hadamard expansion takes the form
\[
\hat G=\frac{\hat U_0}{4\pi r}-\frac{1}{8\pi}r\hat U_1+\hat{\cal E}
\]
where $U_0,U_1$ are given by \eqref{ste} and \(\|\hat{\cal E}\|_{C^1(B_0(r_{*}/2))}
\leq const. (L,r_*,Q)\).
By \eqref{bmieq} and the Sobolev inequality, \(U_0\)
is bounded in \(C^3(B_0(r_{*}))\), and \(U_1\) is bounded in \(C^1(B_0(r_{*}))\) with the
same dependencies, and so
\[
\frac{\partial\hat G}{\partial r}=-\frac{1}{4\pi r^2}\left(1+r\hat\Gamma\right)
\,,\;\hbox{where}\;\|\hat{\Gamma}\|_{C(B_0(r_{*}/2))}
\leq const. (L,r_*,Q)\,.
\]
It follows from the inverse function theorem that the level set condition
\(\hat G=\alpha\) can be solved locally for a smooth function
\begin{equation}\label{displays}
r=R(\alpha,\theta^1,\theta^2)\,,\quad\hbox{such that}\quad \hat G\left (R(\alpha,\theta^1,\theta^2),\theta^1,\theta^2\right)=\alpha\,,
\end{equation}
with \(R=(4\pi\alpha)^{-1}(1+o(1))\) as \(\alpha\to+\infty\).
For fixed \(\theta\) the function \(R\)
is monotonically decreasing for large \(\alpha\) since \(\hat G\) is monotonically decreasing for \(r\)
sufficiently small that \(|r\hat\Gamma|<1\,.\)
Now
\eqref{displays} displays
the boundary \(\partial\Omega_{\alpha,y}\) as a \(C^\infty\) graph over the sphere
for \(\alpha>\alpha_*\).
In this case we write
\(\gamma_{\alpha,y}\) for the corresponding harmonic measure, which is a {\em nonnegative} normalized measure, absolutely continous with respect to the induced surface measure (and
by the Hopf lemma, see \cite{MR1625845}, it will even have strictly positive density as long as the boundary is
\(C^2\)).

\subsection{Harmonic measure and the curved space Newton shell theorem}

Suppose $U\in C^2(M)$ is harmonic on $\Omega_{\alpha,y}$, i.e.
\(
\Delta_g U =0.
\)
The Green identity with $V=G(\cdot,y)-\alpha$ yields (since the boundary is \(C^\infty\))
\[
U(y)
=
-\int_{\partial\Omega_{\alpha,y}
  }
U(x)
\langle \nu,\nabla_x G(x,y)\rangle
\,d\sigma_g(x)=
\int_{\partial\Omega_{\alpha,y}
  }U(x)\,d\gamma_{\alpha,y}(x).
\]
Thus harmonic measure reproduces harmonic functions from boundary values on \(\Omega_{\alpha,y}\).
More generally,
\[
U(y)
=
-\int_{\Omega_{\alpha,y}
  }\left(G(x,y)-\alpha\right)\Delta_g U(x)d\mu_g(x)
+
\int_{\partial\Omega_{\alpha,y}
  }U(x)\,d\gamma_{\alpha,y}(x).
\]

In particular, if $\Omega_{\alpha,y}$ is as above,
then for exterior points \(z\in \Omega_{\alpha,y}^c\) one obtains by choosing
\(U=G(\cdot,z)\), the representation
\begin{equation}\label{ext}
G(y,z)
=
\int_{\partial\Omega_{\alpha,y}} G(x,z)\,d\gamma_{\alpha,y}(x).
\end{equation}


Now suppose that \(z\) is an interior point, i.e.,
\(
z\in \Omega_{\alpha,y}.
\)
Since the Green function assumes the constant boundary value $\alpha$ on the level set
\[
\partial\Omega_{\alpha,y}
=
\{w:G(w,y)=\alpha\},
\]
we obtain this time, after inserting \(U=G(\cdot,z)\),
\begin{align}\notag G(y,z)&=
\int_{\Omega_{\alpha,y}}
\left(G(x,y)-\alpha)\right)\frac{\delta(x-z)}{\sqrt{\det g}}\,d\mu_g(x)
+  
\int_{\partial\Omega_{\alpha,y}}
G(x,z)
\,d\gamma_{\alpha,y}(x)\\
&=G(z,y)-
\alpha+
\int_{\partial\Omega_{\alpha,y}}
G(x,z)
d\gamma_{\alpha,y}(x).
                            \notag\end{align}
Rearranging and using \(G(x,y)=G(y,x)\) we find, for interior \(z\),
\begin{equation}\label{int}
\int_{\partial\Omega_{\alpha,y}}
G(x,z)
d\gamma_{\alpha,y}(x)=\alpha\,,
\end{equation}
and the integral on the left defines a function which is
continuous across the boundary. (As an integral it has a similar singularity structure to a single layer potential, see 
\cite{MR1681462}*{\S 3D}.)

All together, by \eqref{ext} and \eqref{int},  the potential generated by the harmonic measure is constant throughout the interior region
\(\Omega_{\alpha,y}\), equal to the
constant value \(\alpha\) defining the level set, while outside it is the potential produced by a point charge at
\(y\in \Omega_{\alpha,y}\). This is the curved-space analogue of the classical theorem:
%
%
%
in Euclidean space one has the Newtonian kernel
\[
G(x,y)=\frac{1}{4\pi |x-y|}
\qquad (n=3).
\]
and the level sets \(\partial\Omega_{\alpha,y}\) are just spheres centred at \(y\), so the above conclusions reproduce the
usual results on the gravitational potential outside and inside a uniform spherical shell.



\subsection{Interaction estimates} Observing that \eqref{ext} and \eqref{int} imply that  
\begin{equation}\label{nti}
G(y,z)\geq \int_{\partial\Omega_{\alpha,y}}
G(x,z)
d\gamma_{\alpha,y}(x)
\end{equation}
holds always, and applying it twice
\begin{equation}\label{ntii}
G(y,z)\geq \int_{\partial\Omega_{\alpha,y}}
G(x,z)
d\gamma_{\alpha,y}(x)\geq
\iint_{\partial\Omega_{\alpha,y}\times \partial\Omega_{\alpha',z} } G(x,w) d\gamma_{\alpha,y}(x)d\gamma_{\alpha',z}(w)
  \end{equation}
  as long as \(\alpha,\alpha'\) are sufficiently large real numbers. The interpretation of this is that if point charges
  are smeared out on sufficiently small ``harmonic measures'' whose supports surround them, the interaction energy can only decrease or remain unchanged, again as in the classical case.

  Also from \eqref{int} a further integration gives the value of the regularized self-interaction as
  \begin{equation}\label{selfi}
D(\gamma_{\alpha,y},\gamma_{\alpha,y})=\frac12\alpha\,.
    \end{equation}

\section{An Electrostatic Inequality}\label{electro}
\setcounter{equation}{0}
\begin{lemma}[Electrostatic inequality]
\label{prop:electrostatic}
Define
\begin{equation}
\Phi\;\define\;\sum_k G_k-\max_k G_k\;=\;\min_{1\le l\le N} K_l,\qquad\hbox{where}\quad
K_l\define\sum_{k\neq l}G_k .
\label{32p}
\end{equation}
Let $\nu=-(\Delta_g\Phi)\,d\mu_g$. Then $\nu\ge0$, $\;\Phi=\mathbb{G}\nu$, $\;W-\Phi=\max_k G(\cdot,R_k)$,
and $D(\nu,\nu)\le\sum_{j<k}G(R_j,R_k)$. Consequently, for every signed measure $\mu=\mu_+-\mu_-$
with $D(\mu_\pm,\mu_\pm)<\infty$,
\begin{equation}
D(\mu,\mu)-\int_M\Phi\,d\mu+\sum_{j<k}G(R_j,R_k)\;\ge\;0 .
\label{35p}
\end{equation}
\end{lemma}

\begin{proof}
Each $K_l$ is a finite sum of Green functions, hence nonnegative and superharmonic on all of $M$;
and a minimum of finitely many superharmonic functions is superharmonic, since a harmonic $h$ with
$h\le\Phi$ on $\partial B$ satisfies $h\le K_l$ in $B$ for every $l$. All these functions are also
lower semi-continuous, and the Riesz decomposition applies:
\begin{quote}{\em Any superharmonic and lower semicontinuous $u\geq 0$ with
\[
\mathbb{G}\nu<\infty\,,\qquad\hbox{where}\quad\nu=-\Delta_g u\quad\hbox{(distributionally),}
  \]
can be written
\(
u=h+\mathbb{G}\nu,
\)
where $h$ is the greatest harmonic minorant of $u$.}
\end{quote}
Since
$0\le\Phi\le W$ and $W\to0$ at infinity, the maximum principle squeezes every harmonic minorant of
$\Phi$ to zero, so the Riesz decomposition gives $\Phi=G\nu$.
Superharmonicity implies $\nu\ge0$.
The identity $W-\Phi=\max_kG_k$ is the
definition of $\Phi$.

Near $R_l$ the function $G_l$ occurs in every $K_m$ with $m\neq l$ and blows up, while $K_l$ stays
bounded; so $\Phi=K_l$ on a neighbourhood of $R_l$, and in particular $\Phi$ is smooth there with
$\Phi(R_l)=\sum_{k\neq l}G(R_l,R_k)$. This is exactly where (3.2) fails: there the divergent
variable survives in the limit, here it is absent from the branch that attains the minimum. Hence,
using $\Phi\le W$ and then $G(x,R_j)=G(R_j,x)$,
\begin{align*}
D(\nu,\nu)=\tfrac12\int_M\Phi\,d\nu\;&\le\;\tfrac12\int_M W\,d\nu
=\tfrac12\sum_j\int_M G(x,R_j)\,\nu(dx)\\
&=\tfrac12\sum_j\Phi(R_j)=\sum_{j<k}G(R_j,R_k)<\infty .
\end{align*}
Finally $2D(\mu,\nu)=\int G\nu\,d\mu=\int\Phi\,d\mu$, so by (I3)
\[
0\;\le\;D(\mu-\nu,\mu-\nu)=D(\mu,\mu)-\int_M\Phi\,d\mu+D(\nu,\nu),
\]
which with the previous display is \eqref{35p}.
\end{proof}

\begin{corollary}\label{315}
Choose $\alpha_i>0$ with $d_g(R_j,\Omega_{\alpha_i,x_i})>0$ for all $i,j$, and set
$\gamma_i=\gamma_{\alpha_i,x_i}$. Then
\begin{multline}\label{electroeq}
\tfrac12\sum_{i\neq j}G(x_i,x_j)-\sum_{i,j}G(x_i,R_j)+\tfrac12\sum_{j\neq k}G(R_j,R_k)\\
\ge\;-\sum_i D(\gamma_i,\gamma_i)-\sum_i\int\max_k G(x,R_k)\,\gamma_i(dx).
\end{multline}
\end{corollary}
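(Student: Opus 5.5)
Apply Lemma \ref{prop:electrostatic} to the signed measure $\mu=\sum_i\gamma_i$, which is nonnegative with $D(\mu,\mu)<\infty$, and then use the Newton interaction inequalities \eqref{nti}, \eqref{ntii} to pass from smeared charges back to point charges. Inequality \eqref{35p} gives
\[
D(\mu,\mu)-\int_M\Phi\,d\mu+\sum_{j<k}G(R_j,R_k)\ge0 .
\]
Expanding the quadratic term,
\[
D(\mu,\mu)=\sum_iD(\gamma_i,\gamma_i)+\tfrac12\sum_{i\neq j}\iint G(x,w)\,\gamma_i(dx)\gamma_j(dw).
\]
By \eqref{ntii} each cross term is at most $G(x_i,x_j)$. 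This holds provided the $\alpha_i$ are large enough for $\gamma_i$ to be defined and nonnegative, which I take to be implicit in the hypothesis; no disjointness of the $\Omega_{\alpha_i,x_i}$ is needed, since \eqref{nti} holds for interior and exterior points alike. Hence
\[
D(\mu,\mu)\le\sum_iD(\gamma_i,\gamma_i)+\tfrac12\sum_{i\ne j}G(x_i,x_j).
\]

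For the linear term, write $\Phi=W-\max_kG(\cdot,R_k)$ with $W=\sum_jG(\cdot,R_j)$. By hypothesis $d_g(R_j,\Omega_{\alpha_i,x_i})>0$, so each $R_j$ is exterior to $\Omega_{\alpha_i,x_i}$. Equation \eqref{ext} with $y=x_i$, $z=R_j$ then gives the exact identity $\int G(x,R_j)\,\gamma_i(dx)=G(x_i,R_j)$. Therefore
\[
\int\Phi\,d\mu=\sum_{i,j}G(x_i,R_j)-\sum_i\int\max_kG(x,R_k)\,\gamma_i(dx).
\]
Substituting both computations into \eqref{35p} and rearranging gives \eqref{electroeq}, using $\tfrac12\sum_{j\ne k}=\sum_{j<k}$ and the symmetry $G(x_i,R_j)=G(R_j,x_i)$.

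The only delicate point is the direction of the inequalities. The cross terms enter $D(\mu,\mu)$ with a plus sign, so they must be bounded above by $G(x_i,x_j)$, and this is exactly what \eqref{ntii} gives. The nucleus terms need equality rather than an inequality, because they appear with a minus sign; this is why the corollary assumes the nuclei lie strictly outside the smearing regions. One should also check that $\int\max_kG(x,R_k)\,\gamma_i(dx)$ is finite. This holds because $\gamma_i$ is supported on a compact set at positive distance from every $R_k$, where the Green functions are smooth.
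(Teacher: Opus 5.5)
Your proof is correct and follows the paper's argument essentially step for step. You apply \eqref{35p} to $\mu=\sum_i\gamma_i$, bound the cross terms of $D(\mu,\mu)$ above by $G(x_i,x_j)$ via \eqref{ntii}, and use the exact identity \eqref{ext}, valid because every $R_j$ lies outside each $\Omega_{\alpha_i,x_i}$, to get $\int W\,d\gamma_i=\sum_j G(x_i,R_j)$; your remarks on why the nucleus terms need equality, and on the implicit requirement that $\alpha_i$ be large enough for $\gamma_i$ to be a nonnegative harmonic measure, apply equally to the paper's proof.
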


\begin{proof}
Take $\mu=\sum_i\gamma_i$, which is nonnegative, so $D(\mu,\mu)<\infty$ and the Hahn decomposition
is trivial. By (2.6),
\[
D(\mu,\mu)=\sum_{i<j}\iint G\,d\gamma_i\,d\gamma_j+\sum_i D(\gamma_i,\gamma_i)
\;\le\;\sum_{i<j}G(x_i,x_j)+\sum_i D(\gamma_i,\gamma_i);
\]
and since every $R_j$ lies outside $\operatorname{supp}\gamma_i$, (2.3) gives
$\int W\,d\gamma_i=\sum_j G(x_i,R_j)$, whence
\[
\int_M\Phi\,d\mu=\sum_{i,j}G(x_i,R_j)-\sum_i\int\max_k G(x,R_k)\,\gamma_i(dx).
\]
Substituting both into \eqref{35p} and rearranging gives the claim.
\end{proof}
    \paragraph{Proof of Theorem \ref{main}}
    To make use of Corollary \ref{315}  we employ the Hadamard expansions to
    show that the final term is bounded below by a sum of one particle
    potentials, so as to argue as in \cite{solorev}.
    Define, for each \(i\), the distance of the \(i^{th}\) fermion to the nearest nucleus
    \[\delta_i=\min\{d_g(x_i,R_j):j=1,\dots N\}\,,\]
    and then let
    \begin{equation}\label{rhoi}
      \rho_i<\frac12\min\left\{\delta_i,r_*
        \right\}
      \end{equation}
Now choose \(\alpha_i>\alpha_*\) sufficiently large that
\begin{equation}\label{choo}
\overline{\Omega_{\alpha_i,x_i}}\subset B_{x_i}(\rho_i)\,.
\end{equation}
It will be necessary to be more precise, which we do with consideration to the Hadamard approximation.
By \eqref{117} we have,
\[
  G(x,R_j)-\frac{1}{4\pi d_g(x,R_j)}=\frac{\beta(d_g(x,R_j)^2)U_0-1}{4\pi d_g(x,R_j)}-
  \frac{1}{8\pi}\beta(d_g(x,R_j)^2) U_1d_g(x,R_j)+\mathcal{E}
\]
where \(U_0,U_1\) are the functions given in \eqref{ste} (in normal coordinates with  \(R_j\) taken as origin).
This gives
\begin{equation}\label{boundG}
\sup_{x\in M}\,\left|  G(x,R_j)-\frac{1}{4\pi d_g(x,R_j)}\right|\leq \Lambda_0(r_*,L,Q)\,,
\end{equation}
and so if \(x\in\hbox{supp}\gamma_i\) the triangle inequality gives \(d_g(x,R_j)\geq d_g(x_i,R_j)-\rho_i\),
which is positive for sufficiently small \(\rho_i\), and hence
\[
G(x,R_j)\leq \frac{1}{4\pi (d_g(x_i,R_j)-\rho_i)}+\Lambda_0(r_*,L,Q)\,.
  \]
  As a consequence, for \(x_i\in S_f\) where \(S_f=\{z:\min_{j} d_g(z,R_j)\geq r_*\}\) we have
\(d_g(x_i,R_j)-\rho_i\geq r_*/2\)
  , {\em for all } \(j\), and hence
  \[
G(x,R_j)\leq \frac{1}{2\pi r_*}+\Lambda_0(r_*,L,Q)=\Lambda_1(r_*,L,Q)\,.
\]
So for \(i\) such that \(x_i\in S_f\), there holds
    \(  \int \max_{j}G(x,R_j)\gamma_i(dx)\leq \Lambda_1(r_*,L,Q)\,.\) 
    Although this no longer holds for \(x_i\in S_f^c\), we can instead exploit the definition of \(\delta_i\) to deduce
    \(d_g(x_i,R_j)\geq\delta_i>0\) for each \(j\), and hence
    \(x\in\hbox{supp}\gamma_i\implies d_g(x,x_i)\leq\rho_i<\delta_i/2\) which in turn implies \(d_g(x,R_j)\geq \delta_i/2\), and hence
\[  \max_{j}G(x,R_j)\leq \frac{1}{2\pi \delta_i}+\Lambda_0(r_*,L,Q)\,.\] 
All together then
\begin{equation}\label{atn}
  -\int\,\max_{j}G(x,R_j)\gamma_i(dx)\geq\begin{cases}& -\Lambda_1(r_*,L,Q)\phantom{\frac{1}{2\pi \min_{j} d_g(x_i,R_j)}+\Lambda_0}\;\quad\quad\hbox{for}\;x_i\in S_f \\
                                                      & -\left(\frac{1}{2\pi \min_{j} d_g(x_i,R_j)}+\Lambda_0(r_*,L,Q)\right)\qquad\hbox{for}\;x_i\in\, S_f^c.
                                                        \end{cases}
  \end{equation}
and \(S_f^c=\bigcup_{j} B_{R_j}(r_*)\). Therefore we have the lower bound
  \begin{align*}\frac12\sum_{i\neq j} G(x_i,x_j)-\sum_{i,j}G(x_i,R_j)+\frac12\sum_{k\neq k}G(R_j,R_k)
&\geq     -\frac12\sum_{i}D(\gamma_i,\gamma_i) -\sum\limits_{i} \id_{S_f}(x_i)\,\Lambda_1\\
&\qquad -\sum\limits_{i}\id_{S_f^c}(x_i)\left(\Lambda_0(r_*,L,Q)+v(x_i)\right)
  \end{align*}
where
\[
  v(z)=\frac{1}{2\pi\min_{j} d_g(z,R_j)}\,.
\]
It remains to consider the self-interaction term \(D(\gamma_i,\gamma_i)\) which was computed in \eqref{selfi}.
As in \eqref{boundG} we have, by \eqref{117} with
\(y={x_i}\),
\begin{equation}\label{boundGi}
\frac{1}{4\pi d_g(x,x_i)}-\Lambda_0\leq G(x,x_i)\leq \frac{1}{4\pi d_g(x,x_i)}+\Lambda_0\,.
\end{equation}
Recall \eqref{rhoi}, and choose
\[
\alpha_i=\Lambda_0+\frac{1}{2\pi\delta_i}+\frac{1}{2\pi r_*}
\]
then \(G(x,x_i)>\alpha_i\) implies \(d_g(x,x_i)<\delta_i r_*/(2(r_*+\delta_i))<(r_*\wedge \delta_i)/2\) which guarantees \eqref{choo}.
By \eqref{selfi} we have
\(D(\gamma_i,\gamma_i)=\alpha_i/2\) and so conclude
that \eqref{electroeq} is bounded below by
\begin{align*}
  \frac12\sum_{i\neq j} G(x_i,x_j)-\sum_{i,j}G(x_i,R_j)+\frac12\sum_{k\neq k}G(R_j,R_k)
&\geq    -\sum\limits_{i}\frac12\left(\Lambda_0+\frac{1}{2\pi\delta_i}+\frac{1}{2\pi r_*}\right)\\
&\quad  -N\,(\Lambda_0+\Lambda_1)-\sum\limits_{i}\,v(x_i)\id_{S_f^c}(x_i)\,.
\end{align*}
Writing \(\delta_i^{-1}+r_*^{-1}\leq 2r_*^{-1}\id_{S_f}+2\delta_i^{-1}\id_{S_f^c}\) we obtain
\begin{align*}
  \frac12\sum_{i\neq j} G(x_i,x_j)-\sum_{i,j}G(x_i,R_j)+\frac12\sum_{k\neq k}G(R_j,R_k)
&\geq   
-N\,\left(\frac32\Lambda_0+\Lambda_1+\frac{1}{2\pi r_*}\right)-\sum\limits_{i}3\,v(x_i)\id_{S_f^c}(x_i)\,.
\end{align*}
Now since \(S_f^c=\bigcup_{j} B_{R_j}(r_*)\) we may estimate this integral using spherical geodesic coordinates
centered at \(R_j\) in each ball. So put \(r=d_g(z,R_j)\) and use \eqref{IB2} to bound
\(\mu_g(dz)\leq L^{\frac32}dz\), and hence
\begin{align*}
  \int_{M} &|v(z)\id_{S_f^c}(z)|^p\mu_g(dz)\leq\sum_{j}\int_{B_{R_j}(r_*)}(2\pi d_g(z,R_j))^{-p}\mu_g(dz)\\
  & \leq 4\pi N L^{3/2} \int_{r<r_*}(2\pi r)^{-p}r^2dr\,=8N\pi(2\pi)^{-5/2} L^{3/2}r_*^{1/2}\qquad (p=5/2)\, .
  \end{align*}
  The following lower bound then follows from \eqref{LT2}:
  \[
\left(\Psi,\,\left(T+V_{Coul}\right)\Psi\right)
\geq -N\,\left(\frac32\Lambda_0+\Lambda_1+\frac{1}{2\pi r_*}\right)
-C_{LT}3^{5/2}8N\pi(2\pi)^{-5/2} L^{3/2}r_*^{1/2}\,,
\]
  and hence proof of the theorem.

  \protect\renewcommand{\thesection}{\Alph{section}}
\protect\renewcommand{\thesubsection}{\Alph{section}.\arabic{subsection}}
\protect\renewcommand{\theequation}{\Alph{section}.\arabic{equation}}
\renewcommand*{\theHsection}{\Alph{section}}
\renewcommand*{\theHsubsection}{\Alph{section}.\arabic{subsection}}
\renewcommand*{\theHequation}{\Alph{section}.\arabic{equation}}

\setcounter{section}{0}
\setcounter{subsubsection}{0}
\setcounter{equation}{0}

\section{Appendices}
\setcounter{equation}{0}
\subsection{Induced Riemannian Volume Form and Divergence Theorem}

Let $(M^n,g)$ be an oriented Riemannian manifold,
covered by a \(C^\infty\) atlas of charts  
\(\{\kappa,{\mathcal{O}}_\kappa,\tilde {\mathcal{O}}_\kappa\} \), where
\(\kappa:{\mathcal{O}}_\kappa\to\tilde {\mathcal{O}}_\kappa\) is a homeomorphism onto an open subset of \(\R^n\), with
the requirement that, given any non-empty overlap
\({\mathcal{O}}_\kappa\cap {\mathcal{O}}_{\kappa'}\),
the transition function \(\kappa'\circ\kappa^{-1}:
\kappa({\mathcal{O}}_\kappa\cap {\mathcal{O}}_{\kappa'})\to
\kappa'({\mathcal{O}}_\kappa\cap {\mathcal{O}}_{\kappa'}) \) is \(C^\infty \). 
Each chart gives a local coordinate system
\(
(x^1,\dots,x^n)
\) with \(x^j=\kappa^j(x)\forall x\in\mathcal{O}_\kappa\).
In such a coordinate system, the metric \(g\) is represented by a
matrix-valued function
\(
x\mapsto g_{ij}(x).
\)
When we say
  \((M,g)\) is a smooth Riemannian manifold it is to be understood both that \(M\) is \(C^\infty\) and that
  there is an atlas of coordinate charts in which \(g\) is represented by {\em smooth} matrix valued functions
  \(x\mapsto g_{ij}(x)\)
  (and without loss of generality these charts may be 
  taken to be harmonic by \cite{MR0644518}*{Theorem 4.5}).
In a local system of coordinates the Riemannian volume form is
\(
d\mu_g
=
\sqrt{\det g}\,
dx^1\wedge\cdots\wedge dx^n
=\sqrt{\det g}dx\). Let $\Omega\subset M$ be an open set with \(C^1\) boundary
\(
\partial\Omega.
\)
Let
\(
\nu=\nu^\alpha\frac{\partial}{\partial x^\alpha}
\)
denote the outward-pointing unit normal vector field along
$\partial\Omega$.
The induced hypersurface volume form on $\partial\Omega$ is defined by
\(
d\sigma_g
=
\iota_\nu(d\mu_g).
\)
Substituting the local expression for $d\mu_g$ gives
\[
d\sigma_g
=
\iota_\nu\left(
\sqrt{\det(g_{ij})}\,
dx^1\wedge\cdots\wedge dx^n
\right)\,,
\]
and using the contraction identity
\[
\iota_{\partial_{x^\alpha}}
(dx^1\wedge\cdots\wedge dx^n)
=
(-1)^{\alpha-1}
dx^1\wedge\cdots\wedge
\widehat{dx^\alpha}
\wedge\cdots\wedge dx^n,
\]
(where the \(\hat{\;\cdot\;}\) indicates omission), we obtain
\[
d\sigma_g
=
\sqrt{\det(g_{ij})}
\sum_{\alpha=1}^n
(-1)^{\alpha-1}\nu^\alpha
\,dx^1\wedge\cdots\wedge
\widehat{dx^\alpha}
\wedge\cdots\wedge dx^n.
\]
If
\(
X=X^\alpha\frac{\partial}{\partial x^\alpha}
\)
is a \(C^1\) vector field, then
\[
\operatorname{div}_g X
=
\frac{1}{\sqrt{\det g}}
\frac{\partial}{\partial x^\alpha}
\left(
\sqrt{\det g}\,X^\alpha
\right).
\]
The {\em divergence theorem} (\cite{MR2304165}*{\S\,I.3.1}]) states that if \(\Omega\) has a \(C^1\) boundary
and \(X\) is a \(C^1\) vector field
\begin{equation}\label{divth}
\int_\Omega \operatorname{div}_g(X)\,d\mu_g
=
\int_{\partial\Omega}
g(X,\nu)\,d\sigma_g.
\end{equation}

We write \(d_g\) for the Riemannian distance function; recall that it is continuously differentiable on the complement of the cut-locus and everywhere Lipshitz continuous with constant \(\leq 1\). In geodesic normal coordinates it has
the expression \eqref{defr}. The
ball of radius \(R\) in \(M\) is \(B_y(R)=\{x:d_g(x,y)<R\}\).

The {\em injectivity radius} \(r_{inj}(y)\) at \(y\) is the largest number such that the exponential map
restricted to the ball of tangent vectors \(v\in T_yM\) of length \(<r_{inj}(y)\) is a local diffeomorphism;
the {\em convexity radius} \(r_{conv}(y)\leq r_{inj}(y)\) at \(y\) is the radius of the largest ball in \(M\)
centered at \(y\) which is such that any two points in it are joined by a unique geodesic, and this geodesic is minimizing.
Taking the infimum over \(y\) gives the global versions of these,
\(r_{inj}=\inf_{y\in M} r_{inj}(y)\) etc.

\subsection{Weighted Sobolev spaces and the Laplacian}
We now define weighted Sobolev spaces on an asymptotically Euclidean manifold \((M,g)\). To be precise, we'll
assume the manifold has a description as follows.
\begin{enumerate}\item[(i)]
There is a \(C^\infty\) atlas of charts
\[\{\kappa,{\mathcal{O}}_\kappa,\tilde {\mathcal{O}}_\kappa\}_{\kappa\in\scriptscriptstyle{\mathcal{F}\cup\{\infty\}}} \]
where \(\mathcal{F}\) is a finite index set and,
for each \(\kappa\in\mathcal{F}\), \(\tilde {\mathcal{O}}_\kappa\) is a bounded open subset of \(\R^n\) and
\(M\subset \cup_{\kappa\in\scriptscriptstyle{\mathcal{F}\cup\{\infty\}}} {\mathcal{O}}_\kappa\).
There is a compact set \(K\subset M\) such that \[K\subset \cup_{\kappa\in\scriptscriptstyle{\mathcal{F}}} {\mathcal{O}}_\kappa\,,\qquad
  {\mathcal{O}}_\infty=M\backslash K\]
and the chart
\begin{equation}\label{ncend}
  \kappa_\infty:M\backslash K\to\{x\in\R^n:|x|>R_\infty\}
\end{equation}
defines {\em asymptotic coordinates} on the end
\(M\backslash K\).
\item[(ii)]
There is a subordinate \(C^\infty\) partition of unity
\(\{b_\kappa\}_{\kappa\in\scriptscriptstyle{\mathcal{F}\cup\{\infty\}}}\) with \(\hbox{supp\,} b_\kappa\) a compact subset of \(\mathcal{O}_\kappa\)
for \(\kappa\in\mathcal{F}\) and \(\hbox{supp\,}b_\infty=\kappa_\infty^{-1}(\{x\in\R^n:|x|\geq R_\infty'\})\) for some
\(R_\infty'>R_\infty\).
\end{enumerate}
Now for Sobolev spaces, the local spaces \(W^{k,q}_{loc}(M)\) can be defined unambiguously in terms of local coordinate systems.
For global spaces, the weighted norms can be defined in terms of the charts and subordinate partition of unity just described.
The condition \(\sum_{\kappa\in\scriptscriptstyle{\mathcal{F}\cup\{\infty\}}} b_{\kappa}=1\) everywhere means any real-valued function can be decomposed
\(u=\sum u_\kappa\) with \(u_\kappa=b_\kappa u\), and so the collection of functions
\(\{\tilde u_\kappa\}_{\kappa\in\scriptscriptstyle{\mathcal{F}\cup\{\infty\}}}\),
where \(\tilde u_\kappa=u_\kappa\circ \kappa^{-1}:\R^n\to\R\),
determine \(u:M\to\R\). (The functions \(\tilde u_\kappa\) are of course extended by zero outside \(\tilde{\mathcal{O}}_\kappa\).)
The weighted Sobolev norm determining the space
\(W^{k,q}_{-\tau}(M)\) is then defined by
\begin{equation}\label{wsob}
  \|u\|_{W^{k,q}_{-\tau}}\define \sum\nolimits_{\kappa\in\scriptscriptstyle{\mathcal{F}}}
    \|\tilde u_\kappa\|_{W^{k,q}}
  +
 \left(\sum_{|\alpha|\leq k}\int_{|x|>R_\infty}   |\partial^{\alpha} \tilde u_{\scriptscriptstyle{\infty}}(x)|^q\sigma(x)^{q{\tau+q|\alpha|}-n}\,dx\,\right)^{\frac{1}{q}}
  \end{equation}
  where \(\sigma(x)=\sqrt{1+|x|^2}\). This definition extends to tensor fields and vector valued functions in the obvious way.
\begin{remark}
  This definition
  does depend explicitly upon the atlas  and partition of unity. In the compact case different choices give rise to equivalent norms, but this is
  not necessarily the case in the presence of a noncompact end \eqref{ncend}. Nevertheless, if two such asymptotic structures at \(\infty\) exist for which
  \((M,g)\) is asymptotically Euclidean (in the sense of the definition which follows),
  then a type of uniqueness does hold, \cite{bartnik}. This is not crucial for the present paper, in which it is only required that one such
  structure does exist to draw conclusions about stability of matter.
  \end{remark}
\begin{definition}\label{bartnikdef}
Let \(q>n\) and \(\tau>0\). An
n-dimensional
\(W^{1,q}_{-\tau}\)-asymptotically Euclidean manifold is a \(C^\infty\)
manifold with an atlas and subordinate partition of unity as described above,
and in addition a Riemannian metric \(g\in W^{1,q}_{loc}(M)\) with the additional
property that there exists \(\lambda>1\) such that in the corresponding ``asymptotic''
coordinates on \(M\backslash K\) 
\begin{align}\label{asympe}\begin{split}
    \lambda^{-1}|\xi|^2\leq &g_{ij}(x)\xi^i\xi^j\leq \lambda |\xi|^2
    \qquad\forall |x| > R_\infty\,,\qquad\hbox{and}\\
    \sum_{i,j}\sum_{|\alpha|\leq 1}\int_{|x|> R_\infty}
    |\partial^{\alpha}(&g_{ij}(x)-\delta_{ij})|^q\sigma(x)^{\tau+q|\alpha|-n}\,dx\,<\infty\,,
  \end{split}
\end{align}
where the sum in the latter inequality is over multi-indices \(\alpha\)
of order \(\leq 1\) and all \(i,j\) in \(\{1,2,\dots n\}\).
\end{definition}

Thus the condition in \eqref{asympe} says that \(g -e\in W^{1,q}_{-\tau}\) where \(e\) is a smooth bounded
\(\begin{pmatrix}0\\2\end{pmatrix}\)
tensor field which extends
the Kronecker delta \(\delta_{ij}\) from \(\kappa_\infty(M\backslash K)\) to all of \(M\).
Since \(K\) is compact
  we can assume, by increasing \(\lambda\) as necessary, that the first condition of \eqref{asympe} holds
  everywhere, i.e., in all coordinate systems we'll work with an atlas of charts in which
  \begin{equation}\label{asympa}
\lambda^{-1}|\xi|^2\leq g_{ij}(x)\xi^i\xi^j\leq\lambda|\xi|^2\,,\qquad\forall x\in M.
    \end{equation}

  The weighted Sobolev inequality for dimension \(n=3\)
  says that \((M,g)\) and \(q\) are as above, and if \(u\in W^{2,p}_{-\tau}\) for \(q\geq p>3\), then \(u\) agrees
  (as a distribution, strictly speaking) with a \(C^1\) function, also written \(u\), and
  \begin{equation}\label{sobw}
    \max_{M}\left(\sigma^\tau |u|+\sigma^{1+\tau}|\partial u|\right)
    \leq C_S\|u\|_{W^{2,p}_{-\tau}}\,.
    \end{equation}
    We will also need
    the fact that in this situation
    \begin{equation}\label{biso}
-\Delta_g : W_{-\tau}^{2,p} \to W_{-\tau-2}^{0,p} = L^p_{-\tau-2}
      \end{equation}
is an isomorphism for $\tau \in (0,1)$, see \cite{bartnik}*{Prop 2.2}.

\subsection{Hadamard Expansions for Green Functions}

In this note only the three-dimensional case is required.
Let $(M,g)$ be a smooth $3$-manifold with a metric \(g\) whose coefficients
\(x\mapsto g_{ij}(x)\) lie in \(C^\infty\) in a given system of coordinates. Then the local
regularity for uniformly elliptic equations from \cite{MR0737190} applies to the equation \eqref{115},
and so any solution lies in \(C^\infty(M\backslash \{y\})\).
Choose geodesic normal coordinates centred at $y$, in which $y$ is represented by $x^j = 0\forall j$, and the condition
\begin{equation}\label{gnc}
g_{ij}(x)x^j = g_{ij}(0)x^j
\end{equation}
holds throughout. Since we will later multiply by a cutoff function supported in the domain of a normal coordinate system, we may assume the condition holds everywhere. With these coordinates we then have
\begin{equation}\label{defr}
d_g(x,y)=r(x)
  =\left(g_{ij}(0)x^ix^j\right)^{\frac12}
  =\left(g_{ij}(x)x^ix^j\right)^{\frac12}\,.
  \end{equation}
We summarize the Hadamard parametrix expansions following \cite{MR2304165}*{\S\,III.17.4}.
The functions $F_0, F_1, F_2, \dots$ which support the Hadamard approximation are given by
\begin{equation}
F_{\nu}(x) = (-1)^{\nu} \frac{{\nu}!}{(2{\nu})!} \frac{1}{4\pi} r(x)^{2{\nu}-2},
\end{equation}
where $r(x)^2= g_{ij}(x)x^i x^j$ is the radial coordinate (taken positive).  
In particular,
\[
F_0 = \frac{1}{4\pi r}, \qquad 
F_1 = -\frac{r}{8\pi }, \qquad
F_2 = \frac{r^3}{48\pi }.
\]
We need the relations:
\begin{align}\label{defrel}\begin{split}
{\nu} = 0 &: \quad -\partial_j(g^{jk}\partial_k F_0) = \sqrt{\det g^{-1}} \delta_0\,, \\
{\nu} > 1 &: \quad -\partial_j(g^{jk}\partial_kF_{\nu}) = {\nu} F_{{\nu}-1}\,, \quad\hbox{and}\\
&\;\; \quad -2 g^{ik} \partial_k F_{\nu} = x^i F_{{\nu}-1}\,.
\end{split}\end{align}
Define, for ${\nu} = 0,1,2,\dots$,
\begin{equation}\label{defln}
L_{\nu} = x^i \partial_i +\nu- \frac12 g_{ij} x^ib^j\,.
\end{equation}
Then calculation gives, for $u_\nu \in C^2$,
\begin{align}\label{imp}\begin{split}
-\Delta_g \sum_{\nu=0}^N u_\nu F_\nu &= u_0 \sqrt{\det g^{-1}} \delta_0 + (2\pi r^3)^{-1}L_0 u_0 \\
&\quad + \sum_{{\nu}=1}^N (L_{\nu} u_{\nu} - \Delta_g u_{{\nu}-1}) F_{{\nu}-1} - (\Delta_g\, u_N) F_N.
\end{split}\end{align}
The Hadamard expansion to order $N$ is then
\(
\sum_{{\nu}=0}^N U_{\nu} F_{\nu},
\)
where $U_0(0)=1$, and the $\{U_{\nu}\}$ solve the transport equations
\begin{align}
L_0 U_0 &= 0, \\
L_0 U_{\nu} - \Delta_g U_{{\nu}-1} &= 0, \qquad {\nu} = 1,\dots,N.
\end{align}
This leads to
\begin{equation}
 -\Delta_g \left( G_{0}(x) - \sum_{{\nu}=0}^NU_{\nu} F_{\nu} \right)
= -(\Delta_g U_N) F_N,
\end{equation}
and so gives an asymptotic expansion for $G_{0}(x)$ as $r = d(x,{0}) \to 0$.
To obtain a global statement, introduce a smooth cutoff $\beta \in C^\infty(\R)$ with
\[
\beta(t) = 1 \quad \text{for } t\leq \rho^2_1<r^2_{*}\,,
\]
\[
\beta(t) = 0 \quad \text{for } t \geq \rho_2^2\,,
\]
with \(\rho_1<\rho_2\leq r_{*}\).
Inserting as cutoff \(\tilde\beta(x)=\beta(r(x)^2)\) into the expansion leads to
\begin{equation}\label{coa}
-\Delta_g \left(G_{0}(x) - \tilde\beta \sum_{{\nu}=0}^N U_{\nu} F_{\nu} \right)
= -\Delta_g (\tilde\beta U_N) F_N + \mathcal{E}_N\,,
\end{equation}
where the cutoff error is
\begin{equation}
  \mathcal{E}_N \define\sum_{{\nu}=1}^N \left( U_\nu x^j\partial_j\tilde\beta) - (\Delta_g \tilde\beta) U_{{\nu}-1}
-2g^{ij}\partial_i\tilde\beta\partial_j U_{\nu-1}
  \right)\,.
\end{equation}
This is supported in $\{x : d(x,{0}) < \rho_2\}$. Since $\rho_2<r_{*}$ is smaller than the injectivity radius, the use of the normal coordinate condition above is justified, and we can then make a global estimate by means of \eqref{biso}.
Choosing $N=1$, the transport equations are solved to give
\begin{equation}\label{ste}\begin{split}
U_0(x,{0}) &= \left( \frac{\det g({0})}{\det g(x)} \right)^{1/4}\,,
\\
U_1(x,{0}) &= U_0(x) \int_0^1 \frac{\Delta_g\, U_0(sx)}{U_0(sx)} \, ds.
\end{split}\end{equation}
Observe that \(U_0(x,{0})\), which involves $(\det g(x))^{-\frac{1}{4}}$, appears directly in \(\mathcal{E}_N\) with \(N=1\) differentiated once,
and indirectly differentiated twice through \(U_1\).
But also the right hand side of \eqref{coa} indirectly contains \(U_0\)
differentiated four times via the appearance of 
\(\Delta_g (\tilde\beta U_1) F_1 \). Now
given a square matrix \(A\), its determinant is
a polynomial in the matrix elements $A_{ij}$, while (by the Cramer rule) its inverse is a rational
function of the matrix elements (with the determinant in the denominator). It is here that the
condition \eqref{IB2} is needed: it implies that \(x\mapsto \det g(x)\) is a {\em strictly positive}
polynomial function of the \(g_{ij}(x)\), and as such the Moser inequalities
imply bounds in \(B_0(r_*)=\{x:r(x)<r_*\}\)
\begin{equation}\label{bmieq}
  \|U_0\|_{W^{4,q}(B_0(r_*))}
+ \|U_1\|_{W^{2,q}(B_0(r_*))}
  \leq c_1(L,r_{*}, Q)\,.
  \end{equation}
Hence applying \eqref{biso} to \eqref{coa} (and taking \(N=1\)) we can bound 
  \[
\|\left(G_{0}(x) - \tilde\beta \sum_{{\nu}=0}^1 U_{\nu} F_{\nu} \right)\|_{ W_{-\tau}^{2,q}}\,\leq\, c_2(L,r_{*}, Q)\,.
    \]
For \(q>3\) this implies the Hadamard asymptotics with \(C^1\) bounds stated in (I2).
%
Now let \(\{\xi^a(r,\theta^1,\theta^2)\}_{a=1}^3\) be given by the usual three dimensional spherical coordinate formulae
\[
\xi^1=r\cos{\theta^1}\sin\theta^2\,,\;\xi^2=r\sin{\theta^1}\sin\theta^2\quad\hbox{and}\;\xi^3=r\cos\theta^2
\]
and let
\(T^i_a\) satisfy \(g_{ij}(0) T^i_aT^j_b=\delta_{ab}\), then putting \(x^i=T^i_a\xi^a\) gives
\begin{equation}\label{gsc}
g_{ij}(x)\,dx^idx^j=dr^2+g_{\alpha\beta}(r,\theta)d\theta^\alpha d\theta^\beta\,,\qquad\alpha,\beta\in\{1,2\}\,,
\end{equation}
for \(r(x)<r_{inj}\). [The \(g_{r\alpha}\) terms in the metric vanish by the Gauss lemma. The condition
\eqref{IB1} implies that for \(r<r_*\) the spherical part of the metric \(g_{\alpha\beta}\) is bounded above and below in terms of \(k_1,k_2\)
by the Jacobi comparison principle, see \cite{MR3469435}*{Corollary 2.4}.]
Thus \((r,\theta^1,{\theta^2})\) provide a system of {\em geodesic spherical coordinates} in a neighbourhood
\(\{x:d_g(x,y)<r_{*}\}\) of \(y\). and usefully the formula
\[
\nabla_x G(x,y)
=-\frac{\nabla_x r}{4\pi r^2}+O({1})\,,
\]
implies that near to \(y\) the Green function \(G_y(x)\) is
strictly monotonically decreasing in \(r\) (i.e. in radial directions),
and hence that on large \(\alpha\) level sets
\(
\Omega_{\alpha,y}=\{x:G(x,y)=\alpha\}
\)
it is possible to solve locally for \(r\) in terms of \(\alpha,\theta^1, {\theta^2}\), thus expressing
the level set as a \(C^1\) graph over the sphere, i.e. the level sets
are smooth perturbations of Euclidean spheres of radius
$(4\pi\alpha)^{-1}$.

\noindent
{\bf Acknowledgements} Thanks to Robin Reuvers and Philip Solovej for discussions on the stability of matter in a related context.

\small
\baselineskip=13pt


\begin{bibdiv}
\begin{biblist}

  \bib{MR1074481}{article}{
   author={Anderson, Michael T.},
   title={Convergence and rigidity of manifolds under Ricci curvature
   bounds},
   journal={Invent. Math.},
   volume={102},
   date={1990},
   number={2},
   pages={429--445},
   issn={0020-9910},
   review={\MR{1074481}},
   doi={10.1007/BF01233434},
}
    
\bib{bartnik}{article}{
   author={Bartnik, Robert},
   title={The mass of an asymptotically flat manifold},
   journal={Comm. Pure Appl. Math.},
   volume={39},
   date={1986},
   number={5},
   pages={661--693},
   issn={0010-3640},
   review={\MR{0849427}},
   doi={10.1002/cpa.3160390505},
 }

 \bib{MR0586803}{article}{
   author={Baxter, J. R.},
   title={Inequalities for potentials of particle systems},
   journal={Illinois J. Math.},
   volume={24},
   date={1980},
   number={4},
   pages={645--652},
   issn={0019-2082},
   review={\MR{0586803}},
}

 \bib{MR0644518}{article}{
   author={DeTurck, Dennis M.},
   author={Kazdan, Jerry L.},
   title={Some regularity theorems in Riemannian geometry},
   journal={Ann. Sci. \'Ecole Norm. Sup. (4)},
   volume={14},
   date={1981},
   number={3},
   pages={249--260},
   issn={0012-9593},
   review={\MR{0644518}},
}

\bib{MR2787767}{article}{
   author={Enciso, Alberto},
   title={Coulomb systems on Riemannian manifolds and stability of matter},
   journal={Ann. Henri Poincar\'e},
   volume={12},
   date={2011},
   number={4},
   pages={723--741},
   issn={1424-0637},
   doi={10.1007/s00023-011-0084-5},
 }

 \bib{MR1625845}{book}{
   author={Evans, Lawrence C.},
   title={Partial differential equations},
   series={Graduate Studies in Mathematics},
   volume={19},
   publisher={American Mathematical Society, Providence, RI},
   date={1998},
   pages={xviii+662},
   isbn={0-8218-0772-2},
   review={\MR{1625845}},
   doi={10.1090/gsm/019},
 }

 \bib{MR1681462}{book}{
   author={Folland, Gerald B.},
   title={Real analysis},
   series={Pure and Applied Mathematics (New York)},
   edition={2},
   note={Modern techniques and their applications;
   A Wiley-Interscience Publication},
   publisher={John Wiley \& Sons, Inc., New York},
   date={1999},
   pages={xvi+386},
   isbn={0-471-31716-0},
   review={\MR{1681462}},
}

\bib{MR0737190}{book}{
   author={Gilbarg, David},
   author={Trudinger, Neil S.},
   title={Elliptic partial differential equations of second order},
   series={Grundlehren der mathematischen Wissenschaften [Fundamental
   Principles of Mathematical Sciences]},
   volume={224},
   edition={2},
   publisher={Springer-Verlag, Berlin},
   date={1983},
   pages={xiii+513},
   isbn={3-540-13025-X},
   review={\MR{0737190}},
   doi={10.1007/978-3-642-61798-0},
}
 
 \bib{MR2304165}{book}{
   author={H\"ormander, Lars},
   title={The analysis of linear partial differential operators. I-IV},
   series={Classics in Mathematics},
   publisher={Springer, Berlin},
   isbn={978-3-540-49937-4},
   review={\MR{2304165}},
   doi={10.1007/978-3-540-49938-1},
}

\bib{MR1393941}{book}{
   author={Kobayashi, Shoshichi},
   author={Nomizu, Katsumi},
   title={Foundations of differential geometry. Vol. II},
   series={Wiley Classics Library},
   note={Reprint of the 1969 original;
   A Wiley-Interscience Publication},
   publisher={John Wiley \& Sons, Inc., New York},
   date={1996},
   pages={xvi+468},
   isbn={0-471-15732-5},
   review={\MR{1393941}},
}


\bib{MR2583992}{book}{
   author={Lieb, Elliott H.},
   author={Seiringer, Robert},
   title={The stability of matter in quantum mechanics},
   publisher={Cambridge University Press, Cambridge},
   date={2010},
   pages={xvi+293},
   isbn={978-0-521-19118-0},
   review={\MR{2583992}},
 }

 \bib{MR3469435}{book}{
   author={Petersen, Peter},
   title={Riemannian geometry},
   series={Graduate Texts in Mathematics},
   volume={171},
   edition={3},
   publisher={Springer},
   date={2016},
   pages={xviii+499},
   isbn={978-3-319-26652-7},
   isbn={978-3-319-26654-1},
   doi={10.1007/978-3-319-26654-1},
}

\bib{solorev}{article}{
   author={Solovej, J. P.},
   title={Stability of matter},
     conference={
      title={Encyclopedia of mathematical physics. Vol. 2. Quantum
      mechanics, quantization and quantum computation},
   },
   book={
      publisher={Academic Press, Amsterdam},
   },
   isbn={978-0-443-29953-7},
   isbn={978-0-323-95703-8},
   date={[2025] \copyright 2025},
   pages={120--126},
   review={\MR{4975271}},
 }
\end{biblist}
\end{bibdiv}
\end{document}